\newtheorem{lemma}{Lemma}
\newtheorem{definition}{Definition}
\newtheorem{proposition}{Proposition}
\newcommand{\tr}{{\rm tr }}
\newcommand{\lh}{\mathcal{L(H)}}
\newcommand{\vp}{\varphi}
\newcommand{\C}{\mathbb{C}}
\newcommand{\N}{\mathbb{N}}
\newcommand{\R}{\mathbb{R}}
\newcommand{\be}{\begin{equation}}
\newcommand{\eeq}{\end{equation}}
\newcommand{\bet}{\begin{equation*}}
\newcommand{\eeqt}{\end{equation*}}
\newcommand{\bea}{\begin{eqnarray}}
\newcommand{\eeqa}{\end{eqnarray}}
\newcommand{\beat}{\begin{eqnarray*}}
\newcommand{\eeqat}{\end{eqnarray*}}
\newcommand{\h}[1]{\mathcal{#1}}
\newcommand{\hil}{\mathcal{H}}
\def\<{\langle}
\def\>{\rangle}
\begin{document}

\title{Inefficient eight-port homodyne detection and covariant phase space observables}

\author{Pekka Lahti}
\address{Turku Centre for Quantum Physics, Department of Physics and Astronomy, University of Turku, FI-20014 Turku, Finland}
\email{pekka.lahti@utu.fi}
\author{Juha-Pekka Pellonp\"a\"a}
\address{Turku Centre for Quantum Physics, Department of Physics and Astronomy, University of Turku, FI-20014 Turku, Finland}
\email{juha-pekka.pellonpaa@utu.fi}
\author{Jussi Schultz}
\address{Turku Centre for Quantum Physics, Department of Physics and Astronomy, University of Turku, FI-20014 Turku, Finland}
\email{jussi.schultz@utu.fi}
\begin{abstract}
We consider the quantum optical eight-port homodyne detection scheme in the case that each of the associated photon detectors is assigned with a different quantum efficiency. We give a mathematically rigorous and strictly quantum mechanical proof of the fact that the measured observable (positive operator measure) in the high-amplitude limit is a smearing of the covariant phase space observable related to the ideal measurement. The result is proved for an arbitary parameter field. Furthermore, we investigate some properties of the measured observable. In particular, we show that the state distinguishing power of the observable is not affected by detector inefficiencies. 
\\
PACS numbers: 03.65.-w, 03.67.-a, 42.50.-p

\noindent {\bf Keywords:} eight-port homodyne detector, covariant phase space observable, detector inefficiency
\end{abstract}
\maketitle

\section{Introduction}
The eight-port homodyne detection scheme  has been investigated extensively ever since it was introduced in the realm of quantum optics. The significance of this scheme comes from the fact that it provides a means to study many fundamental questions in quantum mechanics. Among these are the problems of quantum state reconstruction and approximate joint measurements of quadrature observables. The usefulness of this setup is due to the fact that it provides a quantum optical realization of the measurement of any covariant phase space observable \cite{eightport}. With regard to the aforementioned problems, these observables are of great importance. On one hand, since the work of \cite{Prugo}, a large class of covariant phase space observables are known to possess the property that the measurement outcome statistics determine the state uniquely. On the other hand, the quadrature observables are approximately jointly measurable exactly when there exists a covariant phase space observable which is their approximate joint observable \cite{Carmeli}. Thus, it is natural to investigate the detailed structure of the observable measured with this specific scheme.

Since any realistic measurement involves detectors with non-unit quantum efficiencies, it is important to study also the effects of detector inefficiencies in detail. As reported in the recent review on single-photon detectors \cite{Hadfield}, the efficiencies of available detectors range from very high to as low as a few percents. It is therefore clear that in most cases the effect of inefficiencies is far from being negligible. In the eight-port homodyne detection scheme, it was shown in \cite{Leonhardt1993} that with the specific choice of a vacuum parameter field and an overall quantum efficiency for the detectors, the measured probability distribution is a smoothed version of the $Q$-function of the signal field. The smoothing is caused by a Gaussian convolution which is due to the precence of the non-unit quantum efficiencies. Up to our knowledge, this analysis has not yet been done in the case of an arbitrary parameter field, or with different quantum efficiencies for each of the detectors.

The purpose of this paper is to give a mathematically rigorous derivation of the high-amplitude limit observable measured with an inefficient eight-port homodyne detector. The derivation is done strictly within the framework of quantum mechanics without any classicality assumptions. The result is that whenever detector inefficiencies are present, the measured observable is a smearing of the ideal one. Furthermore, we study some basic properties of the measured observable. In particular, we find that the state distinguishing power does not depend on the associated quantum efficiencies. More specifically, we show that the measurement statistics of the ideal observable can always be reconstructed from the smeared statistics. The paper is organized as follows. We start by giving the basic framework for our study in section \ref{preli}. In section \ref{measurement} we derive the high-amplitude limit observable. First, we consider the high-amplitude limit in an inefficient balanced homodyne detector, and then use the results to obtain the measured observable in an inefficient eight-port homodyne detector in the high-amplitude limit. The basic properties of the high-amplitude limit observable are studied in section \ref{properties}, and the conclusions are given in section \ref{conclusions}.

\section{Preliminaries}\label{preli}
Let $\h H$ be a complex separable Hilbert space associated with a single mode electromagnetic field, and let $\{ \vert n\rangle \vert n\in\N  \}$ be an orthonormal basis of $\h H$. Let $a^*$, $a$ and $N$ denote the creation, annihilation and number operators associated with this basis. Let $\lh$ and $\h T(\hil)$ denote the sets of bounded and trace class operators on $\hil$. The states of the system are represented by positive trace class operators with unit trace, density operators, and the pure states correspond to the one-dimensional projections $P[\vp] =\vert \vp\rangle\langle\vp\vert$, $\vp\in\hil$, $\Vert\vp\Vert =1$. Among the pure states are the coherent states $\{ \vert z\rangle\vert z\in\C\}$ defined by
$$
\vert z\rangle = e^{-\frac{\vert z\vert^2}{2}} \sum_{n=0}^\infty \frac{z^n}{\sqrt{n!}}\vert n\rangle.
$$
For $z=\frac{1}{\sqrt{2}}(q+ip)$, the corresponding coherent state $\vert z\rangle$ has the position representation
$$
 \psi_z(x) = \left(\frac{1}{\pi}\right) ^{1/4} e^{-i\frac{qp}{2} } e^{ipx} e^{-\frac{1}{2} (x-q)^2},
$$
 and the subspace $\mathcal{D}_{\textrm{coh}} =\textrm{lin} \{ \vert z\rangle \vert z\in\C \}$ is dense in $\hil$.

The observables are represented by normalized positive operator measures. Among these are the standard quadrature observables $\mathsf{Q}, \mathsf{P}:\h B(\R)\rightarrow\lh$, where $\h B(\R)$ stands for the Borel $\sigma$-algebra of subsets of $\R$. That is, $\mathsf{Q}$ and $\mathsf{P}$ are the spectral measures of the quadrature operators $Q =\frac{1}{\sqrt{2}} (\overline{a^* +a})$ and $P =\frac{i}{\sqrt{2}} (\overline{a^* -a})$, where the bar stands for the closure of an operator. For each $\theta\in [0,2\pi)$ we define the rotated quadrature observable $\mathsf{Q}_\theta:\h B(\R)\rightarrow \lh$ by
$$
\mathsf{Q}_\theta (X) =e^{i\theta N} \mathsf{Q} (X) e^{-i\theta N}, \qquad X\in\h B(\R),
$$
so that in particular $\mathsf{Q}_0 =\mathsf{Q}$ and $\mathsf{Q}_{\pi/2} =\mathsf{P}$. For each positive trace class operator with unit trace $S$, we define the phase space observable $\mathsf{G}^S:\h B(\R^2)\rightarrow \lh$ by
\begin{equation}\label{covariant}
\mathsf{G}^S(Z) =\frac{1}{2\pi} \int_Z W_{qp} S W_{qp}^* \, dqdp, \qquad Z\in\h B(\R^2), 
\end{equation}
where  $W_{qp} =e^{i\frac{qp}{2}} e^{-iqP} e^{ipQ}$ is the Weyl operator. The operator $S$ is called the generating operator of the observable. The mapping $(q,p)\mapsto W_{qp}$ is an irreducible projective unitary representation of $\R^2$, and each $\mathsf{G}^S$ is covariant with respect to $W_{qp}$ in the sense that
$$
W_{qp} \mathsf{G}^S(Z) W_{qp}^* =\mathsf{G}^S (Z+ (q,p))
$$
for all $Z\in\h B(\R^2)$ and $(q,p)\in\R^2$. Furthermore, each covariant phase space observable is of the form \eqref{covariant} for some generating operator $S$ \cite{Holevo, Werner} (for recent alternative proofs, see \cite{Cassinelli, Kiukas}).

For a quantum system in a state $\rho$, the measurement statistics of an observable $\mathsf{E}$ is given by the probability measure $Z\mapsto \mathsf{E}_\rho(Z) = \tr{[\rho\mathsf{E}(Z)]}$. It follows that for each phase space observable $\mathsf{G}^S$ the associated probability measure has the density $g^S_{\rho}(q,p) =\frac{1}{2\pi}\tr[ \rho W_{qp} S W_{qp}^*]$. For a pure state $P[\vp]$, we use the notation $\mathsf{E}_\vp$ for the probability measure related to the observable $\mathsf{E}$, and the notation $\mathsf{G}^{\vp}$ for the observable generated by $P[\vp]$. Any two observables are informationally equivalent if their ability to distinguish between states is equal. If the measurement statistics of an observable determine the state uniquely, the observable is said to be informationally complete.

\section{Measurement scheme}\label{measurement}

\subsection{Inefficient balanced homodyne detector}

\begin{figure}
\includegraphics[width=10cm]{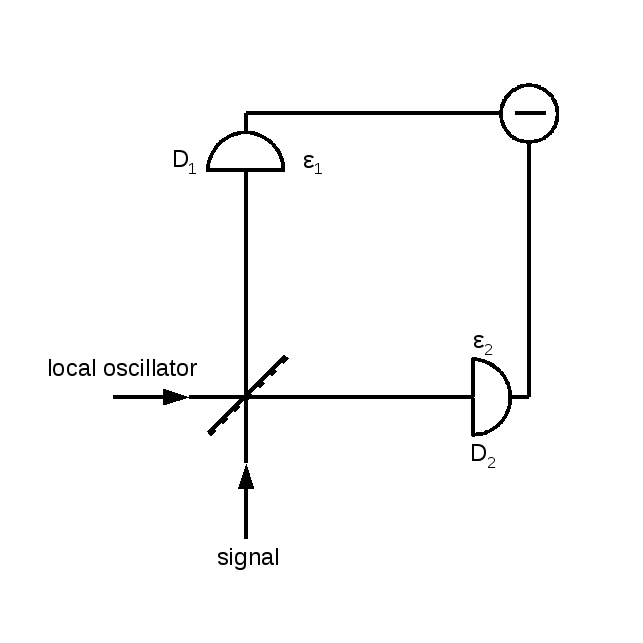}
\caption{Balanced homodyne detector}\label{nonideal}
\end{figure}

The balanced homodyne detector involves two modes, the signal field with the Hilbert space $\hil$ and an auxiliary field of the local oscillator with the Hilbert space $\hil_{\textrm{aux}}$. We denote by $\rho$ the state of the signal field and the auxiliary field is in the coherent state $\vert z\rangle$. These fields are coupled via a lossless $50:50$ beam-splitter which is described by a unitary operator $U$ satisfying 
\begin{equation}\label{beamsplitter}
U\vert\alpha\rangle\otimes\vert\beta\rangle =\vert \tfrac{1}{\sqrt{2}}(\alpha -\beta )\rangle \otimes \vert \tfrac{1}{\sqrt{2}} (\alpha +\beta)\rangle
\end{equation}
for all $\alpha,\beta\in\C$. Here the first term in the tensor product refers to the signal field, and the second term to the auxiliary field. The scheme involves two photon detectors $D_1$ and $D_2$ with quantum efficiencies $\epsilon_1$ and $\epsilon_2$, respectively. With these efficiencies, each of the detectors now measures the smeared photon number, given by the detection observable (see, for instance, \cite[pp. 79-83]{Leonhardt} or \cite[pp. 177-180]{OQP})
\begin{equation}\label{unsharpnumber}
n\mapsto E^{\epsilon_j}_n =\sum_{m=n}^\infty \binom{m}{n} \epsilon_j^n (1-\epsilon_j)^{m-n} \vert m\rangle\langle m\vert.
\end{equation}
We are interested in the scaled photon number differences so that the set of possible measurement outcomes is taken to be
$$
\Omega =\bigg\{ \frac{1}{\sqrt{2}\vert z\vert} \left(\frac{n}{\epsilon_2} -\frac{m}{\epsilon_1}\right) \bigg\vert m,n\in\N \bigg\}.
$$
This specific choice for the scaling is motivated by the fact that it assures that for a coherent signal state the first moment of the probability measure remains finite in the limit $\vert z\vert \rightarrow \infty$. The detection statistics is thus represented by the observable $\mathsf{E}_{\epsilon_1,\epsilon_2} :\h B(\R)\rightarrow \mathcal{L} (\hil \otimes\hil_{\textrm{aux}} )$, 
$$
\mathsf{E}_{\epsilon_1,\epsilon_2} (X) = \sum_{X} E^{\epsilon_1}_m \otimes E^{\epsilon_2}_n
$$
where the summation is now over those $m,n\in\N$ for which $\frac{1}{\sqrt{2}\vert z\vert} \left(\frac{n}{\epsilon_2} -\frac{m}{\epsilon_1}\right) \in X$. The signal observable  $\mathsf{E}^z_{\epsilon_1,\epsilon_2} :\h B(\R)\rightarrow \mathcal{L} (\hil  )$ measured with this setup is now completely determined by the relation
$$
\textrm{tr}[\rho \mathsf{E}^z_{\epsilon_1,\epsilon_2} (X)] =\textrm{tr}[U\rho \otimes \vert z\rangle\langle z\vert U^* \mathsf{E}_{\epsilon_1,\epsilon_2} (X)]
$$
for all states $\rho$ and all $X\in\h B(\R)$, that is, the observable can be written as 
$$
 \mathsf{E}^z_{\epsilon_1,\epsilon_2} (X) = V_z^* U^* \mathsf{E}_{\epsilon_1,\epsilon_2} (X) U V_z,\qquad X\in\h B(\R)
$$
where $V_z :\hil\rightarrow \hil \otimes \hil_{\textrm{aux}}$ is the linear isometry $\varphi\mapsto \varphi\otimes \vert z\rangle$.

To consider rigorously the high-amplitude limit $\vert z\vert\rightarrow\infty$ in this measurement scheme, we need to be specific about what we mean by the limit of the associated observables. First of all, we recall that a sequence $(p_k)_{k\in\N}$ of probability measures $p_k:\h B(\R^n)\rightarrow [0,1]$ converges weakly to a probability measure $p:\h B(\R^n)\rightarrow [0,1]$ if $\lim_{k\rightarrow\infty} \int f(x)\, dp_k(x) =\int f(x)\, dp(x)$  for all bounded continuous functions $f:\R^n\rightarrow\R$. According to \cite[Theorem 2.1]{Billingsley}, the weak convergence is equivalent to the condition $\lim_{k\rightarrow\infty} p_k (X) =p(X)$ for all $X\in \h B(\R^n)$ such that $p(\partial X)=0$, where $\partial X$ denotes the boundary of  $X$. This is the motivation for the following definition, used also in \cite{moment}.
\begin{definition}
A sequence $(\mathsf{E}^k )_{k\in\N}$ of observables $\mathsf{E}^k:\h B(\R^n)\rightarrow \hil$ converges to an observable $\mathsf{E}:\h B (\R^n)\rightarrow\hil$ weakly in the sense of probabilities if
$$
\lim_{k\rightarrow\infty} \mathsf{E}^k (X) =\mathsf{E} (X)
$$
in the weak operator topology for all $X\in\h B(\R^n)$ such that $\mathsf{E} (\partial X)=0$.
\end{definition}
Several equivalent conditions for this convergence are given in \cite[Proposition 10]{moment}. In particular, this convergence happens if and only if there exists a dense subspace $\h D\subset\hil$ such that for all unit vectors $\vp\in\h D$, the corresponding sequence $(\mathsf{E}^k_\vp)_{k\in\N}$ of probability measures converges weakly to $\mathsf{E}_\vp$. Note that since the weak limit of a sequence of probability measures is unique \cite[Theorem 1.3]{Billingsley}, it follows that a sequence of observables can converge to at most one observable weakly in the sense of probabilities. Furthermore, according to the continuity theorem \cite[Theorem 7.6]{Billingsley}, the weak convergence of probability measures is equivalent to the pointwise convergence of the corresponding characteristic functions. We will use these facts with the choice $\h D =\h D_{\textrm{coh}}$ to prove our result.

We fix the phase $\theta \in [0,2\pi)$ of the local oscillator and take an arbitrary sequence $(r_k)_{k\in\N}$ of positive numbers such that $\lim_{k\rightarrow\infty} r_k =\infty$. Let $z_k =r_k e^{i\theta}$, so that we obtain a sequence $(\mathsf{E}^{z_k}_{\epsilon_1,\epsilon_2})_{k\in\N}$ of observables $\h B(\R)\rightarrow\h L(\hil)$. Suppose that $\epsilon_1<1$ or $\epsilon_2 <1$, and define the probability density $f_{\epsilon_1,\epsilon_2}:\R\rightarrow\R$ by 
\begin{equation}\label{tiheys}
f_{\epsilon_1,\epsilon_2} (x) = \sqrt{\tfrac{2\epsilon_1\epsilon_2}{\pi (\epsilon_1 -2\epsilon_1\epsilon_2 +\epsilon_2)}}\, e^{-\frac{2\epsilon_1\epsilon_2}{\epsilon_1 -2\epsilon_1\epsilon_2 +\epsilon_2} x^2}.
\end{equation}
Let $\mu_{\epsilon_1,\epsilon_2}:\h B(\R)\rightarrow[0,1]$ be the probability measure determined by $f_{\epsilon_1,\epsilon_2}$, that is, $\mu_{\epsilon_1,\epsilon_2}(X) =\int_X f_{\epsilon_1,\epsilon_2}(x)\, dx$ for all $X\in \h B(\R)$. We wish to extend the definition of $\mu_{\epsilon_1,\epsilon_2}$ to include also the case of ideal detectors, and thus we define $\mu_{1,1}$ as the Dirac measure concentrated at the origin. We will prove in the next proposition, that the smeared rotated quadrature observable $\mu_{\epsilon_1,\epsilon_2} *\mathsf{Q}_\theta:\h B(\R)\rightarrow\h L(\hil)$ defined as the weak integral
\begin{equation}\label{sumea}
(\mu_{\epsilon_1,\epsilon_2} *\mathsf{Q}_\theta)(X)  =\int \mu_{\epsilon_1,\epsilon_2}(X-x) \, d\mathsf{Q}_\theta(x),\qquad X\in\h B(\R),
\end{equation}
is the high-amplitude limit in this measurement scheme. Note that $\mu_{1,1} \ast \mathsf{Q}_\theta =\mathsf{Q}_\theta$. We start with a lemma.

\begin{lemma}\label{raja}
For all $a,b\in\R\setminus \{ 0 \}$ we have 
$$
\lim_{x\rightarrow\infty}\left[ax^2\left(1-e^{-\frac{i}{ax}}\right) +bx^2\left(1-e^{\frac{i}{bx}}\right)\right]=\frac{1}{2}\left(\frac{1}{a} +\frac{1}{b}\right).
$$
\end{lemma}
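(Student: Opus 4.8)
The plan is to reduce the statement to a routine second-order Taylor expansion. First I would substitute $t=1/x$, so that the limit $x\to\infty$ becomes $t\to 0^+$ and the bracketed expression takes the form
$$
\frac{a}{t^2}\bigl(1-e^{-it/a}\bigr)+\frac{b}{t^2}\bigl(1-e^{it/b}\bigr).
$$
Then I would invoke Taylor's theorem for the exponential in the form $e^{w}=1+w+\tfrac12 w^2+R(w)$, valid for all $w\in\C$, together with the explicit bound $|R(w)|\le\tfrac16|w|^3 e^{|w|}$ (which follows from termwise estimation of the power series, using $n!\ge 6(n-3)!$ for $n\ge 3$). Applying this with $w=-it/a$ yields
$$
\frac{a}{t^2}\bigl(1-e^{-it/a}\bigr)=\frac{i}{t}+\frac{1}{2a}-\frac{a}{t^2}R(-it/a),
$$
where the last term is $O(t)$ as $t\to 0$ since $|R(-it/a)|\le\tfrac16|t/a|^3 e^{|t/a|}$; similarly, with $w=it/b$ one obtains $\frac{b}{t^2}(1-e^{it/b})=-\frac{i}{t}+\frac{1}{2b}+O(t)$.

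Adding the two expressions, the singular imaginary parts $\pm i/t$ cancel exactly. This cancellation is the only point that really needs to be observed, and it is automatic: the coefficient of $1/t$ is $+i$ in the first term and $-i$ in the second, independently of $a$ and $b$ (had they not cancelled, the limit would not exist). What remains is $\frac{1}{2a}+\frac{1}{2b}+O(t)$, which tends to $\frac12\bigl(\frac1a+\frac1b\bigr)$ as $t\to 0^+$, i.e.\ as $x\to\infty$, proving the lemma.

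There is essentially no obstacle here; the computation is elementary, and the hypothesis $a,b\in\R\setminus\{0\}$ is exactly what is needed for all the fractions to make sense. The one thing to be careful about is to carry the expansion of $e^{w}$ to second order — a first-order expansion would leave an uncontrolled $O(1/t)$ error — and to track that the $O(t^3)$ remainders, once divided by $t^2$, genuinely vanish in the limit.
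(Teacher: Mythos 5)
Your proof is correct. It takes the same first step as the paper (the substitution $t=1/x$, reducing the problem to a limit as $t\to 0^+$), but then diverges: the paper splits the expression into its real and imaginary parts, writing them in terms of $\cos$ and $\sin$, and evaluates each of the two resulting $0/0$ limits by applying l'Hospital's rule twice; you instead expand $e^{w}$ to second order with an explicit remainder bound $|R(w)|\le\tfrac16|w|^3e^{|w|}$ and read off the answer. The two computations are of comparable length, but yours has the advantage of making visible \emph{why} the limit is finite — the singular $\pm i/t$ terms cancel identically and the constant terms $\tfrac{1}{2a}+\tfrac{1}{2b}$ appear directly from the quadratic coefficients — whereas the l'Hospital route obtains the cancellation only implicitly (the numerators of both quotients happen to vanish to the right order). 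Your remainder estimate, via $n!\ge 6(n-3)!$ for $n\ge 3$, is valid, and the $O(t)$ control of the error terms after division by $t^2$ is exactly the point that needs checking; you have checked it. No gaps.
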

\begin{proof}
Using the change of variables $y=\frac{1}{x}$ and l'Hospital's rule twice we have
\begin{eqnarray*}
 &&\lim_{x\rightarrow\infty} \left[ ax^2\left(1-e^{-\frac{i}{ax}}\right) +bx^2\left(1-e^{\frac{i}{bx}}\right)\right]\\
&=& \lim_{y\rightarrow 0+}\frac{a(1-\cos(y/a)) + b(1-\cos(y/b))}{y^2}  + i \lim_{y\rightarrow 0+} \frac{a\sin(y/a) -b\sin(y/b)}{y^2}\\
&=& \lim_{y\rightarrow 0+} \frac{1}{2} \left( \frac{1}{a}\cos (y/a) + \frac{1}{b} \cos (y/b)\right) + i \lim_{y\rightarrow 0+} \frac{1}{2}\left( \frac{1}{b} \sin (y/b) -\frac{1}{a} \sin (y/a)\right)\\
&=&\frac{1}{2}\left(\frac{1}{a} +\frac{1}{b}\right)
\end{eqnarray*}
\end{proof}

\begin{proposition}\label{homodynelimit}
 For all $\epsilon_1,\epsilon_2 \in (0,1]$ the sequence $(\mathsf{E}^{z_k}_{\epsilon_1,\epsilon_2})_{k\in\N}$ converges to $\mu_{\epsilon_1,\epsilon_2} *\mathsf{Q}_\theta$ weakly in the sense of probabilities.
\end{proposition}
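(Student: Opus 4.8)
The plan is to apply the criterion recalled after the definition of convergence weakly in the sense of probabilities: since $\h D_{\textrm{coh}}$ is dense in $\hil$, it suffices to prove that for each unit vector $\vp\in\h D_{\textrm{coh}}$ the probability measures $(\mathsf{E}^{z_k}_{\epsilon_1,\epsilon_2})_\vp$ converge weakly to $(\mu_{\epsilon_1,\epsilon_2}\ast\mathsf{Q}_\theta)_\vp$; by the continuity theorem \cite[Theorem 7.6]{Billingsley} this is in turn equivalent to pointwise convergence of the corresponding characteristic functions. So I would fix $t\in\R$ and evaluate $\lim_{k\to\infty}\int e^{itx}\,d(\mathsf{E}^{z_k}_{\epsilon_1,\epsilon_2})_\vp(x)$.

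The computational core is an explicit formula for these characteristic functions on coherent states. Writing $\lambda_k=t/(\sqrt2\,r_k)$, the structure of the scheme gives
$$
\int e^{itx}\,d(\mathsf{E}^{z_k}_{\epsilon_1,\epsilon_2})_\vp(x)=\bra \vp\otimes z_k|\,U^*\Bigl(\sum_{m} e^{-i\lambda_k m/\epsilon_1}E^{\epsilon_1}_m\otimes\sum_{n} e^{i\lambda_k n/\epsilon_2}E^{\epsilon_2}_n\Bigr)U\,|\vp\otimes z_k\ket .
$$
By the binomial theorem applied to \eqref{unsharpnumber}, each weighted sum collapses to a function of the number operator, $\sum_m e^{-i\mu m}E^{\epsilon}_m=\bigl(1-\epsilon(1-e^{-i\mu})\bigr)^N$. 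Since $U$ maps product coherent states to product coherent states by \eqref{beamsplitter} and $\bra\alpha|c^N|\beta\ket=\exp\bigl(-\tfrac12|\alpha|^2-\tfrac12|\beta|^2+c\,\overline{\alpha}\beta\bigr)$, the right-hand side becomes, for $\vp=|w\ket$,
$$
\exp\Bigl[-\tfrac12|w-z_k|^2\,\epsilon_1\bigl(1-e^{-i\lambda_k/\epsilon_1}\bigr)-\tfrac12|w+z_k|^2\,\epsilon_2\bigl(1-e^{i\lambda_k/\epsilon_2}\bigr)\Bigr],
$$
and for a general $\vp=\sum_j d_j|w_j\ket$ a finite double sum $\sum_{j,l}\overline{d_j}d_l(\cdots)$ of analogous terms, now with factors $\bra\tfrac1{\sqrt2}(w_j\pm z_k)|c^N|\tfrac1{\sqrt2}(w_l\pm z_k)\ket$; note that $\tfrac1{\sqrt2}(w_j\pm z_k)-\tfrac1{\sqrt2}(w_l\pm z_k)=\tfrac1{\sqrt2}(w_j-w_l)$ is independent of $z_k$.

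Next I would pass to the limit $k\to\infty$ with $z_k=r_ke^{i\theta}$. Writing $w_j=\tfrac1{\sqrt2}(q_j+ip_j)$ and expanding $|w_j\mp z_k|^2=r_k^2\mp\sqrt2\,r_k(q_j\cos\theta+p_j\sin\theta)+|w_j|^2$, the contributions of order $r_k^2$ are exactly of the form treated in Lemma \ref{raja} (taking $x=r_k$, $a=\sqrt2\,\epsilon_1/t$, $b=\sqrt2\,\epsilon_2/t$), hence converge to $-\tfrac{t^2}{8}\bigl(\tfrac1{\epsilon_1}+\tfrac1{\epsilon_2}\bigr)$; the contributions of order $r_k$ are handled by $1-e^{\mp ix}=\pm ix+O(x^2)$, the crucial point being that the oscillatory $z_k$-dependent imaginary phases occurring in the off-diagonal ($j\ne l$) terms cancel between the two detector factors; everything else tends to $0$. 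Collecting the pieces, the characteristic function converges, for $\vp=|w\ket$, to $\exp\bigl[it(q_w\cos\theta+p_w\sin\theta)-\tfrac{t^2}{8}(\tfrac1{\epsilon_1}+\tfrac1{\epsilon_2})\bigr]$, and to the corresponding finite sum in the general case.

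Finally I would identify this limit with the characteristic function of $(\mu_{\epsilon_1,\epsilon_2}\ast\mathsf{Q}_\theta)_\vp$: the characteristic function of a convolution factors; the Gaussian $f_{\epsilon_1,\epsilon_2}$ (or, when $\epsilon_1=\epsilon_2=1$, the Dirac measure $\mu_{1,1}$) has characteristic function $\exp\bigl[-\bigl(\tfrac18(\tfrac1{\epsilon_1}+\tfrac1{\epsilon_2})-\tfrac14\bigr)t^2\bigr]$; and $(\mathsf{Q}_\theta)_w=\mathsf{Q}_{e^{-i\theta}w}$ is the position distribution of the coherent state $|e^{-i\theta}w\ket$, which contributes the factor $\exp\bigl[it(q_w\cos\theta+p_w\sin\theta)-\tfrac{t^2}{4}\bigr]$; multiplying, the two expressions coincide, and the analogous check works for finite linear combinations. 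By uniqueness of weak limits this yields the proposition via \cite[Proposition 10]{moment}. The \emph{main obstacle} is the passage to the limit: both detector contributions diverge separately on the scales $r_k^2\lambda_k$ and $r_k^2\lambda_k^2$, and Lemma \ref{raja} is precisely the device that extracts a finite limit from their sum; one has to check with care that the discarded higher-order terms are genuinely $o(1)$ in spite of the growing prefactors $r_k^2$ and $r_k$, and that the $z_k$-dependent phases in the cross terms cancel exactly.
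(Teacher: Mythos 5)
Your proposal is correct and follows essentially the same route as the paper's proof: reduction to pointwise convergence of characteristic functions on $\mathcal D_{\textrm{coh}}$ via \cite[Proposition 10]{moment} and the continuity theorem, explicit closed-form evaluation of the characteristic function on coherent states using the beam-splitter action and the structure of $E^{\epsilon_j}_n$ (your operator identity $\sum_m e^{-i\mu m}E^{\epsilon}_m=(1-\epsilon(1-e^{-i\mu}))^N$ is just a tidier packaging of the paper's term-by-term summation), and Lemma \ref{raja} to extract the finite limit of the order-$r_k^2$ contributions. The only cosmetic differences are that you identify the limit by factoring the characteristic function of the convolution instead of computing the explicit Gaussian integral in position representation, and that your formula degenerates correctly at $\epsilon_1=\epsilon_2=1$ (where the paper instead cites \cite{moment}); note also that the order-$r_k$ contributions from the two detectors actually add up to the finite phase $\tfrac{it}{\sqrt2}(\overline{\alpha}e^{i\theta}+\beta e^{-i\theta})$ rather than cancel, each being separately $O(1)$ since $r_k(1-e^{\mp i\lambda_k/\epsilon_j})=O(1)$.
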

\begin{proof} The case $\epsilon_1 =\epsilon_2 =1$ has been proved in \cite{moment}, so we may assume that $\epsilon_1 <1$ or $\epsilon_2 <1$. We need to show that
 \begin{equation}\label{raja2}
\lim_{k\rightarrow\infty}\int e^{itx}\, d\langle\alpha \vert \mathsf{E}^{z_k}_{\epsilon_1,\epsilon_2}(x) \vert\beta\rangle =\int e^{itx}\, d\langle \alpha\vert \left(\mu_{\epsilon_1,\epsilon_2} *\mathsf{Q}_\theta \right) (x) \vert \beta\rangle
\end{equation}
for all $\alpha, \beta \in\C$ and $t \in\R$. For $t=0$ the equation is clearly true, so we assume now that $t\neq 0$.

First note that for all $\alpha,\beta\in\C$ we have 
\begin{eqnarray*}
\langle \alpha\vert V_{z_k}^* U^* E^{\epsilon_1}_m \otimes E^{\epsilon_2}_n U V_{z_k}\vert\beta\rangle &=& \langle \tfrac{1}{\sqrt{2}}(\alpha -z_k) \vert E^{\epsilon_1}_m \vert \tfrac{1}{\sqrt{2}}(\beta -z_k) \rangle \langle \tfrac{1}{\sqrt{2}}(\alpha +z_k) \vert E^{\epsilon_2}_n \vert \tfrac{1}{\sqrt{2}}(\beta +z_k) \rangle \\
&=& \frac{1}{m!n!} \left( \tfrac{\epsilon_1}{2} (\overline{\alpha} -\overline{z}_k)(\beta -z_k)\right)^m \left(\tfrac{\epsilon_2}{2} (\overline{\alpha} +\overline{z}_k)( \beta +z_k)\right)^n \\
&&\times\, e^{-\frac{1}{2}\vert\alpha\vert^2 -\frac{1}{2}\vert\beta\vert^2 -\vert z_k\vert^2} 
e^{\frac{1}{2} (1-\epsilon_1)(\overline{\alpha} -\overline{z}_k)(\beta -z_k) +\frac{1}{2}(1-\epsilon_2) (\overline{\alpha} +\overline{z}_k)( \beta +z_k)}
\end{eqnarray*}
so that 
\begin{eqnarray*}
 &&\int e^{itx}\, d\langle\alpha \vert \mathsf{E}^{z_k}_{\epsilon_1,\epsilon_2}(x) \vert\beta\rangle = \sum_{m,n=0}^\infty e^{\frac{it}{\sqrt{2}\vert z_k\vert }\left(\frac{n}{\epsilon_2} -\frac{m}{\epsilon_1}\right)} \langle \alpha\vert V_{z_k}^* U^* E^{\epsilon_1}_m \otimes E^{\epsilon_2}_n U V_{z_k}\vert\beta\rangle\\
&=& \sum_{m,n=0}^\infty  \frac{1}{m!n!} \left(e^{-\frac{it}{\sqrt{2}\epsilon_1\vert z_k\vert }}\right)^m \left(e^{\frac{it}{\sqrt{2}\epsilon_2\vert z_k\vert }}\right)^n \left( \tfrac{\epsilon_1}{2} (\overline{\alpha} -\overline{z}_k)(\beta -z_k)\right)^m \left(\tfrac{\epsilon_2}{2} (\overline{\alpha} +\overline{z}_k)( \beta +z_k)\right)^n \\
&&\times\, e^{-\frac{1}{2}\vert\alpha\vert^2 -\frac{1}{2}\vert\beta\vert^2 -\vert z_k\vert^2} 
e^{\frac{1}{2} (1-\epsilon_1)(\overline{\alpha} -\overline{z}_k)(\beta -z_k) +\frac{1}{2}(1-\epsilon_2) (\overline{\alpha} +\overline{z}_k)( \beta +z_k)}\\
&=&e^{-\frac{1}{2}\vert\alpha\vert^2 -\frac{1}{2}\vert\beta\vert^2 +\overline{\alpha}\beta} e^{-\frac{\epsilon_1}{2}(1- \textrm{exp}(-\frac{it}{\sqrt{2}\epsilon_1\vert z_k\vert }))(\overline{\alpha} -\overline{z}_k)(\beta -z_k)} e^{-\frac{\epsilon_2}{2}( 1-\textrm{exp}(\frac{it}{\sqrt{2}\epsilon_2\vert z_k\vert })) (\overline{\alpha} +\overline{z}_k)( \beta +z_k)}\\
&=& e^{-\frac{1}{2}\vert\alpha\vert^2 -\frac{1}{2}\vert\beta\vert^2 +\overline{\alpha}\beta} e^{-\frac{\epsilon_1}{2}(1- \cos(\frac{t}{\sqrt{2}\epsilon_1 r_k}) +i\sin(\frac{t}{\sqrt{2}\epsilon_1 r_k }))(\overline{\alpha}\beta -r_k (\overline{\alpha}e^{i\theta} +\beta e^{-i\theta}) )} \\
&&\times\,  e^{-\frac{\epsilon_2}{2}((1- \cos(\frac{t}{\sqrt{2}\epsilon_2 r_k}) -i\sin(\frac{t}{\sqrt{2}\epsilon_2 r_k }))(\overline{\alpha}\beta +r_k (\overline{\alpha}e^{i\theta} +\beta e^{-i\theta}) )} e^{-\frac{\epsilon_1}{2}r_k^2(1- \textrm{exp}(-\frac{it}{\sqrt{2}\epsilon_1 r_k})) - \frac{\epsilon_2}{2}r_k^2( 1-\textrm{exp}(\frac{it}{\sqrt{2}\epsilon_2 r_k}))} 
\end{eqnarray*}
Now we may use lemma \ref{raja} and standard limit results  for trigonometric functions to calculate
$$
\lim_{k\rightarrow\infty}\int e^{itx}\, d\langle\alpha \vert \mathsf{E}^{z_k}_{\epsilon_1,\epsilon_2}(x) \vert\beta\rangle = e^{-\frac{1}{2}\vert\alpha\vert^2 -\frac{1}{2}\vert\beta\vert^2 +\overline{\alpha}\beta}   e^{\frac{it}{\sqrt{2}} (\overline{\alpha}e^{i\theta} +\beta e^{-i\theta})}   e^{-\frac{t^2}{8}\left(\frac{1}{\epsilon_1} +\frac{1}{\epsilon_2}\right)}.
$$
We still need to show that this is the right-hand side of equation \eqref{raja2}.

Since for all $X\in \h B(\R)$ we have
$$
\langle \alpha \vert \mathsf{Q}_\theta (X)\vert\beta\rangle = \langle \alpha \vert e^{i\theta N} \mathsf{Q} (X)e^{-i\theta N} \vert \beta \rangle = \langle e^{-i\theta} \alpha \vert \mathsf{Q} (X) \vert e^{-i\theta}\beta\rangle,
$$
we may express the density of the measure $X\mapsto \langle\alpha\vert \left(\mu_{\epsilon_1,\epsilon_2} \ast \mathsf{Q}_\theta \right) (X)\vert \beta\rangle$ as
$$
x\mapsto \int f_{\epsilon_1,\epsilon_2} (x-y) \, d\langle\alpha' \vert \mathsf{Q}(y) \vert \beta'\rangle,
$$
where $\alpha'=e^{-i\theta}\alpha$ and $\beta' =e^{-i\theta}\beta$. Putting $\alpha' =\frac{1}{\sqrt{2}} (q+ip)$ and  $\beta' =\frac{1}{\sqrt{2}}(u+iv)$ we find that in the position representation 
\begin{eqnarray*}
&&\int e^{itx}\, d\langle \alpha\vert \left(\mu_{\epsilon_1,\epsilon_2} *\mathsf{Q}_\theta \right)(x) \vert \beta\rangle = \int e^{itx}\left( \int f_{\epsilon_1,\epsilon_2} (x-y) \, d\langle\alpha' \vert \mathsf{Q}(y) \vert \beta'\rangle\right)\, dx\\
&=& \frac{1}{\pi}\sqrt{\tfrac{2\epsilon_1\epsilon_2}{\epsilon_1 -2\epsilon_1\epsilon_2 +\epsilon_2}}\int e^{itx}\left(\int e^{-\frac{2\epsilon_1\epsilon_2}{\epsilon_1 -2\epsilon_1\epsilon_2 +\epsilon_2} (x-y)^2} e^{\frac{i}{2}(qp-uv)} e^{iy(v-p)} e^{-\frac{1}{2}(y-q)^2 -\frac{1}{2}(y-u)^2} \, dy\right)\, dx\\
&=& e^{-\frac{1}{2}\vert\alpha\vert^2 -\frac{1}{2}\vert\beta\vert^2 +\overline{\alpha}\beta} e^{\frac{it}{\sqrt{2}}(\overline{\alpha}e^{i\theta} +\beta e^{-i\theta})} e^{-\frac{t^2}{8} \left(\frac{1}{\epsilon_1} +\frac{1}{\epsilon_2}\right)}.
\end{eqnarray*}
It follows that  
$$
\lim_{k\rightarrow\infty}\int e^{itx}\, d\langle\varphi \vert \mathsf{E}^{z_k}_{\epsilon_1,\epsilon_2}(x) \varphi\rangle =\int e^{itx}\, d\langle \varphi\vert \left(\mu_{\epsilon_1,\epsilon_2} *\mathsf{Q}_\theta \right) (x) \varphi\rangle
$$
for all unit vectors $\varphi\in\mathcal{D}_{\textrm{coh}}$, so the claim follows from \cite[Proposition 10]{moment} and the continuity theorem \cite[Theorem 7.6]{Billingsley}.

\end{proof}

\subsection{Inefficient eight-port homodyne detector}

\begin{figure}
\includegraphics[width=12cm]{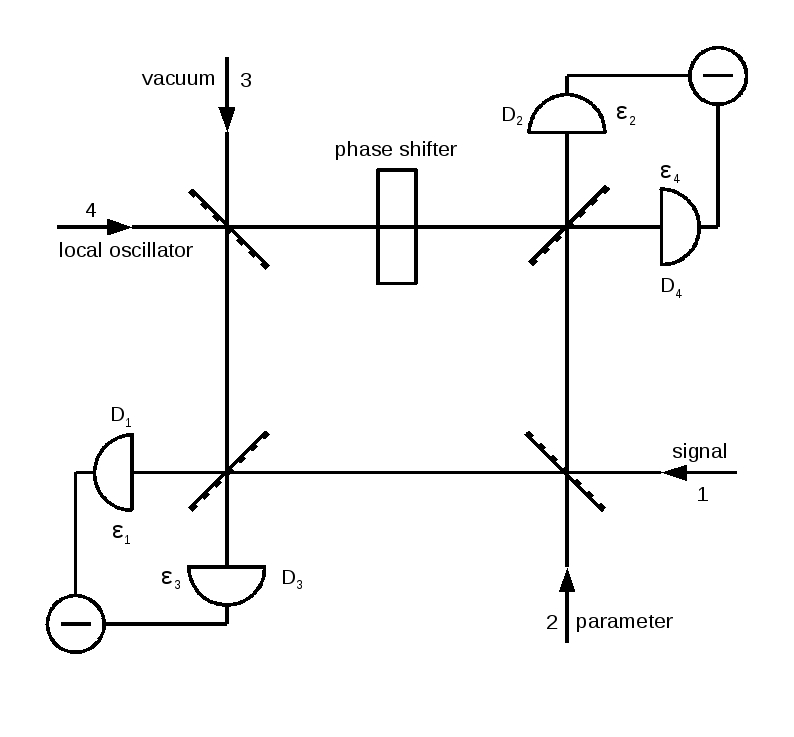}
\caption{Eight-port homodyne detector}\label{detector}
\end{figure}

The eight-port homodyne detector involves four input modes, four $50:50$ beam splitters, a phase shifter, and four photon detectors (see figure \ref{detector}). If $\hil_j$, $j=1,2,3,4$, is the Hilbert space of the $j$th input mode, then the Hilbert space of the entire four mode field is $\hil_1 \otimes \hil_2 \otimes \hil_3 \otimes \hil_4$. We denote by $\rho$  the state of the signal field and by $S$ the state of the parameter field. If the coherent local oscillator is in the state $\vert \sqrt{2}z\rangle$, the initial state of the four-mode field is
$$
\rho\otimes S \otimes \vert 0\rangle\langle 0\vert \otimes \vert \sqrt{2} z\rangle\langle \sqrt{2}z \vert.
$$
In this case we use the notation  $U_{ij}$ for the unitary transform representing the $50:50$ beam splitter. Here the subscripts refer to the primary and secondary input modes, that is, the first and second components of the tensor product in equation \eqref{beamsplitter}. The dashed lines in figure \ref{detector} represent the primary input modes. The phase shifter with phase shift $\phi$ is modelled with the unitary operator $e^{i\phi N}$.

We assign to each detector $D_j$ a quantum efficiency $\epsilon_j\in (0,1]$, so that each detector measures the observable defined in equation \eqref{unsharpnumber}. The detection is represented by the biobservable
$$
(X,Y)\mapsto \mathsf{E}_{\epsilon_1,\epsilon_3}(X)\otimes \mathsf{E}_{\epsilon_2,\epsilon_4} (Y) =\sum_{X,Y} E^{\epsilon_1}_k \otimes E^{\epsilon_2}_l \otimes E^{\epsilon_3}_m \otimes E^{\epsilon_4}_n,
$$
where the summation is now taken over those $k,l,m,n\in\N$ for which $\frac{1}{\sqrt{2}\vert z\vert}\left(\frac{m}{\epsilon_3} -\frac{k}{\epsilon_1}\right) \in X$ and $\frac{1}{\sqrt{2}\vert z\vert}\left(\frac{n}{\epsilon_4} -\frac{l}{\epsilon_2}\right) \in Y$. The state of the entire four-mode field before detection is 
$$
\sigma_{\rho,S,z,\phi} =     U_{13} \otimes U_{24} \left( U_{12}(\rho \otimes S) U_{12}^* \otimes \vert z\rangle\langle z\vert \otimes  \vert ze^{i\phi}\rangle\langle  ze^{i\phi}\vert \right)U_{13}^* \otimes U_{24}^*,
$$
so that the detection statistics are given by the probability bimeasures
$$
(X,Y)\mapsto \textrm{tr} [\sigma _{\rho,S,z,\phi} \mathsf{E}_{\epsilon_1,\epsilon_3}(X)\otimes \mathsf{E}_{\epsilon_2,\epsilon_4} (Y)].
$$
Now there exists a unique signal observable $\mathsf{E}^{S,z,\phi}:\h B(\R^2)\rightarrow \h L(\hil_1)$ such that 
$$
\textrm{tr}[\rho \mathsf{E}^{S,z,\phi} (X\times Y)] =\textrm{tr}[ \sigma _{\rho,S,z,\phi} \mathsf{E}_{\epsilon_1,\epsilon_3}(\tfrac{1}{\sqrt{2}}X)\otimes \mathsf{E}_{\epsilon_2,\epsilon_4} (\tfrac{1}{\sqrt{2}}Y)],
$$
where the scaling has been chosen for later convenience. In order to calculate the high-amplitude limit we wish to express $\mathsf{E}^{S,z,\phi}$ in terms of the unsharp homodyne detection observables $\mathsf{E}^z_{\epsilon_1,\epsilon_3} $ and $\mathsf{E}^{ze^{i\phi}}_{\epsilon_2,\epsilon_4} $. In fact, after simple calculations we find that 
$$
 \textrm{tr}[\rho \mathsf{E}^{S,z,\phi} (X\times Y)] =\textrm{tr}[ U_{12}(\rho \otimes S) U_{12}^* \mathsf{E}^z_{\epsilon_1,\epsilon_3}(\tfrac{1}{\sqrt{2}}X) \otimes \mathsf{E}^{ze^{i\phi}}_{\epsilon_2,\epsilon_4} (\tfrac{1}{\sqrt{2}}Y)]
$$
for all $X,Y\in\h B(\R)$. Denote again $z_k =r_k e^{i\theta}$, where $\theta \in [0,2\pi)$ is fixed and $(r_k)_{k\in\N}$ is an arbitrary sequence of positive numbers such that $\lim_{k\rightarrow\infty} r_k =\infty$. It follows from proposition \ref{homodynelimit} and the boundedness of the associated operators that for all $X,Y\in\h B(\R)$ such that the boundaries $\partial X$ and $\partial Y$ are of zero Lebesgue measure, we have the convergence
$$
\lim_{k\rightarrow\infty} \textrm{tr}[\rho \mathsf{E}^{S,z_k,\phi} (X\times Y)] = \textrm{tr}[ U_{12}(\rho \otimes S) U_{12}^*
\left(\mu_{\epsilon_1,\epsilon_3} \ast \mathsf{Q}_\theta \right) (\tfrac{1}{\sqrt{2}}X) \otimes \left(\mu_{\epsilon_2,\epsilon_4}\ast \mathsf{Q}_{\theta +\phi} \right) (\tfrac{1}{\sqrt{2}}Y)].
$$
Note that the condition of zero Lebesgue measure follows from the fact that each $\mathsf{Q}_\theta$ is unitarily equivalent to $\mathsf{Q}$ which is absolutely continuous with respect to the Lebesgue measure. In particular, we may choose $\theta =0$ and $\phi =\frac{\pi}{2}$ to obtain the limit
$$
\lim_{k\rightarrow\infty} \textrm{tr}[\rho \mathsf{E}^{S,r_k ,\frac{\pi}{2}} (X\times Y)] = \textrm{tr}[ U_{12}(\rho \otimes S) U_{12}^*
\left(\mu_{\epsilon_1,\epsilon_3} \ast \mathsf{Q} \right) (\tfrac{1}{\sqrt{2}}X) \otimes \left(\mu_{\epsilon_2,\epsilon_4}\ast \mathsf{P}\right) (\tfrac{1}{\sqrt{2}}Y)].
$$
Now we still need to find the explicit form of the high-amplitude limit observable.

Let $\mu_{\epsilon}:\h B(\R^2)\rightarrow [0,1]$ be the unique probability measure satisfying 
\begin{equation}\label{mitta}
\mu_{\epsilon } (X\times Y ) = \mu_{\epsilon_1,\epsilon_3} (\tfrac{1}{\sqrt{2}} X) \mu_{\epsilon_2,\epsilon_4 } (\tfrac{1}{\sqrt{2}} Y)
\end{equation}
for all $X,Y\in \h B(\R)$. Here we have chosen a collective symbol $\epsilon$ to represent the  involved quantum efficiencies $( \epsilon_1, \epsilon_2, \epsilon_3, \epsilon_4 )$. This probability measure has a density which we denote by $f_\epsilon$ if and only if both $\mu_{\epsilon_1,\epsilon_3} $ and $\mu_{\epsilon_2,\epsilon_4}$ have densities given by \eqref{tiheys}. In the rest of the paper we will indicate explicitly when we assume the existence of the density $f_\epsilon$. Let $C$ denote the conjugation map $\psi\mapsto (x\mapsto \overline{\psi(x)})$ and let $(r_k)_{k\in\N}$ be as before. The high-amplitude limit observable is now given by the following proposition, in which the smeared phase space observable is defined as a weak integral similar to \eqref{sumea}.

\begin{proposition}\label{smearing}
 The sequence $(\mathsf{E}^{S,r_k ,\frac{\pi}{2}})_{k\in\N}$ converges to $\mu_{\epsilon} *\mathsf{G}^{CSC^{-1}}$ weakly in the sense of probabilities.
\end{proposition}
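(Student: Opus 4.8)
The plan is to combine the limit formula established immediately before the proposition with an explicit identification of the limit operator measure. That computation gives, for every state $\rho$ and all $X,Y\in\h B(\R)$ with Lebesgue-null boundaries (in particular for all bounded intervals),
\begin{equation}\label{eq:startpt}
\lim_{k\to\infty}\textrm{tr}[\rho\,\mathsf{E}^{S,r_k,\frac{\pi}{2}}(X\times Y)]=\textrm{tr}\big[U_{12}(\rho\otimes S)U_{12}^*\,(\mu_{\epsilon_1,\epsilon_3}*\mathsf{Q})(\tfrac{1}{\sqrt2}X)\otimes(\mu_{\epsilon_2,\epsilon_4}*\mathsf{P})(\tfrac{1}{\sqrt2}Y)\big].
\end{equation}
Two things then remain: (a) to show that the right-hand side of \eqref{eq:startpt} equals $\textrm{tr}[\rho\,(\mu_\epsilon*\mathsf{G}^{CSC^{-1}})(X\times Y)]$, and (b) to promote this pointwise convergence on rectangles to convergence weakly in the sense of probabilities.

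For (a) I would start from the known description of the \emph{ideal} eight-port detector, i.e.\ the case $\epsilon_1=\epsilon_2=\epsilon_3=\epsilon_4=1$, in which $\mu_{1,1}$ is the Dirac measure at the origin and the convolutions disappear: by \cite{eightport} (see also \cite{moment}) one has
\begin{equation}\label{eq:ideal}
\textrm{tr}\big[U_{12}(\rho\otimes S)U_{12}^*\,\mathsf{Q}(\tfrac{1}{\sqrt2}X)\otimes\mathsf{P}(\tfrac{1}{\sqrt2}Y)\big]=\textrm{tr}[\rho\,\mathsf{G}^{CSC^{-1}}(X\times Y)],\qquad X,Y\in\h B(\R),
\end{equation}
so that the probability measure $m_{\rho,S}\colon A\times B\mapsto\textrm{tr}[U_{12}(\rho\otimes S)U_{12}^*\,\mathsf{Q}(A)\otimes\mathsf{P}(B)]$ on $\R^2$ is the image of $\mathsf{G}^{CSC^{-1}}_\rho$ under $(q,p)\mapsto(q/\sqrt2,p/\sqrt2)$. (If one prefers, \eqref{eq:ideal} can instead be re-derived here by evaluating both sides on coherent signal states and on a spectral decomposition of $S$, in the style of the proof of Proposition \ref{homodynelimit}; note that $CSC^{-1}$ is again a positive operator of trace one, $C$ being antiunitary.) I would then unfold the two convolutions in \eqref{eq:startpt} as weak integrals, rewriting $(\mu_{\epsilon_1,\epsilon_3}*\mathsf{Q})(\tfrac{1}{\sqrt2}X)\otimes(\mu_{\epsilon_2,\epsilon_4}*\mathsf{P})(\tfrac{1}{\sqrt2}Y)$ as $\int\!\!\int\mu_{\epsilon_1,\epsilon_3}(\tfrac{1}{\sqrt2}X-s)\,\mu_{\epsilon_2,\epsilon_4}(\tfrac{1}{\sqrt2}Y-t)\,d(\mathsf{Q}\otimes\mathsf{P})(s,t)$ against the joint spectral measure $\mathsf{Q}\otimes\mathsf{P}$ on $\hil_1\otimes\hil_2$ (a Fubini argument for the associated product bimeasure); pairing with the trace-class operator $U_{12}(\rho\otimes S)U_{12}^*$ and moving the trace under the integral then turns the right-hand side of \eqref{eq:startpt} into $\int\!\!\int\mu_{\epsilon_1,\epsilon_3}(\tfrac{1}{\sqrt2}X-s)\,\mu_{\epsilon_2,\epsilon_4}(\tfrac{1}{\sqrt2}Y-t)\,dm_{\rho,S}(s,t)$. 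The change of variables $q=\sqrt2\,s$, $p=\sqrt2\,t$ rewrites this, by \eqref{eq:ideal}, as $\int\!\!\int\mu_{\epsilon_1,\epsilon_3}\big(\tfrac{1}{\sqrt2}(X-q)\big)\,\mu_{\epsilon_2,\epsilon_4}\big(\tfrac{1}{\sqrt2}(Y-p)\big)\,d\mathsf{G}^{CSC^{-1}}_\rho(q,p)$, and the defining relation \eqref{mitta} now reads precisely $\mu_{\epsilon_1,\epsilon_3}\big(\tfrac{1}{\sqrt2}(X-q)\big)\,\mu_{\epsilon_2,\epsilon_4}\big(\tfrac{1}{\sqrt2}(Y-p)\big)=\mu_\epsilon\big((X\times Y)-(q,p)\big)$, so the expression equals $\int\mu_\epsilon\big((X\times Y)-(q,p)\big)\,d\mathsf{G}^{CSC^{-1}}_\rho(q,p)=\textrm{tr}[\rho\,(\mu_\epsilon*\mathsf{G}^{CSC^{-1}})(X\times Y)]$. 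That $\mu_\epsilon*\mathsf{G}^{CSC^{-1}}$ is a well-defined normalized positive operator measure is checked exactly as for \eqref{sumea}. This establishes (a).

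For (b) I would fix a unit vector $\vp\in\mathcal{D}_{\textrm{coh}}$ and apply (a) with $\rho=P[\vp]$: by \eqref{eq:startpt} the probability measures $(\mathsf{E}^{S,r_k,\frac{\pi}{2}})_\vp$ on $\R^2$ then converge, as $k\to\infty$, to $(\mu_\epsilon*\mathsf{G}^{CSC^{-1}})_\vp$ on every rectangle. Since $(\mu_\epsilon*\mathsf{G}^{CSC^{-1}})_\vp$ is the convolution of $\mu_\epsilon$ with the absolutely continuous measure of density $g^{CSC^{-1}}_\vp$, it is itself absolutely continuous with respect to the Lebesgue measure, so every rectangle is one of its continuity sets; since the rectangles form a convergence-determining class \cite[Theorem 2.2]{Billingsley}, it follows that $(\mathsf{E}^{S,r_k,\frac{\pi}{2}})_\vp\to(\mu_\epsilon*\mathsf{G}^{CSC^{-1}})_\vp$ weakly. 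Because $\mathcal{D}_{\textrm{coh}}$ is dense, \cite[Proposition 10]{moment} then gives that $(\mathsf{E}^{S,r_k,\frac{\pi}{2}})_{k\in\N}$ converges to $\mu_\epsilon*\mathsf{G}^{CSC^{-1}}$ weakly in the sense of probabilities.

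I expect the main obstacle to be part (a): justifying the interchange of the weak operator integral with the trace, and keeping the several $1/\sqrt2$ rescalings aligned so that \eqref{mitta} applies verbatim; if, moreover, \eqref{eq:ideal} is to be proved here rather than quoted, the coherent-state computation behind it (which mirrors the proof of Proposition \ref{homodynelimit}) adds to the bookkeeping. Part (b) and the positivity/normalization checks are routine.
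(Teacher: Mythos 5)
Your proposal is correct and follows essentially the same route as the paper's proof: first identify the limiting bimeasure on product sets with $\mu_\epsilon*\mathsf{G}^{CSC^{-1}}$ by combining the structure of $(I\otimes F)U_{12}$ with the defining relation \eqref{mitta} for $\mu_\epsilon$, and then upgrade pointwise convergence on rectangles with Lebesgue-null boundaries to weak convergence via Billingsley's criterion. The only difference is organizational: the paper carries out your step (a) as a direct integral computation for pure $\rho$ and $S$ using the explicit kernel $(I\otimes F)U_{12}(\vp\otimes\psi)(x,y)=\pi^{-1/2}\langle W_{\sqrt{2}x,\sqrt{2}y}C\psi\vert\vp\rangle$ from the proof of Lemma 2 of \cite{eightport} and then extends to mixed states by linearity and continuity, whereas you quote the integrated ideal-case identity from \cite{eightport} and transport the convolutions through the resulting pushforward measure --- the same computation packaged at the level of measures.
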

\begin{proof}
 We begin by showing that 
\begin{equation}\label{kaava1}
 \textrm{tr}[ U_{12}(\rho \otimes S) U_{12}^*
\left(\mu_{\epsilon_1,\epsilon_3} \ast \mathsf{Q} \right) (\tfrac{1}{\sqrt{2}}X) \otimes \left(\mu_{\epsilon_2,\epsilon_4}\ast \mathsf{P}\right) (\tfrac{1}{\sqrt{2}}Y)] = \textrm{tr} [\rho  (\mu_{\epsilon} *\mathsf{G}^{CSC^{-1}}) (X \times Y)]
\end{equation}
for all $X,Y\in \h B(\R^2)$. 

Let $\vp \in \hil_1$ and $\psi \in\hil_2$ be unit vectors. First note that $\mathsf{P} (\cdot ) =F^{-1} \mathsf{Q}(\cdot) F$ where $F$ is the Fourier-Plancherel operator. Furthermore, the relation 
$$
(I\otimes F) U_{12} (\vp \otimes \psi )(x,y)  =\frac{1}{\sqrt{\pi}} \langle W_{\sqrt{2}x,\sqrt{2}y} C\psi\vert \vp\rangle 
$$
holds for all $y\in\R$ and almost all $x\in\R$ (see, e.g., the proof of \cite[Lemma 2]{eightport}). Now a direct calculation shows us that
\begin{eqnarray*}
 && \textrm{tr}[ U_{12}(P[\vp] \otimes P[\psi]) U_{12}^*
\left(\mu_{\epsilon_1,\epsilon_3} \ast \mathsf{Q} \right) (\tfrac{1}{\sqrt{2}}X) \otimes \left(\mu_{\epsilon_2,\epsilon_4}\ast \mathsf{P}\right) (\tfrac{1}{\sqrt{2}}Y)] \\
&=& \langle  (I\otimes F) U_{12} (\vp\otimes\psi ) \vert (\mu_{\epsilon_1,\epsilon_3}  \ast  \mathsf{Q} )(\tfrac{1}{\sqrt{2}} X ) \otimes  (\mu_{\epsilon_2,\epsilon_4}  \ast  \mathsf{Q} )(\tfrac{1}{\sqrt{2}} Y ) (I\otimes F) U_{12} (\vp\otimes\psi) \rangle \\
&=& \int \mu_{\epsilon_1,\epsilon_3} (\tfrac{1}{\sqrt{2}} X -x) \mu_{\epsilon_2,\epsilon_4} (\tfrac{1}{\sqrt{2}}  Y-y) \big\vert ((I\otimes F) U_{12} \vp \otimes \psi )(x,y) \big\vert^2\, dxdy\\
&=& \frac{1}{\pi} \int \mu_{\epsilon_1,\epsilon_3} (\tfrac{1}{\sqrt{2}} X -x) \mu_{\epsilon_2,\epsilon_4} (\tfrac{1}{\sqrt{2}}  Y-y) \big\vert \langle W_{\sqrt{2}x,\sqrt{2}y} C\psi\vert \vp\rangle \big\vert^2 dx dy \\
&=& \frac{1}{2\pi} \int \mu_{\epsilon_1,\epsilon_3} (\tfrac{1}{\sqrt{2}} (X -x')) \mu_{\epsilon_2,\epsilon_4} (\tfrac{1}{\sqrt{2}}  (Y-y'))  \big\vert \langle W_{x',y'} C\psi\vert \vp\rangle \big\vert^2 dx' dy' \\
&=& \frac{1}{2\pi} \int  \mu_\epsilon (X\times Y-(x',y'))  \big\vert \langle W_{x',y'} C\psi\vert \vp\rangle \big\vert^2 dx' dy' \\
&=& \langle \vp\vert (\mu_\epsilon \ast \mathsf{G}^{C\psi}) (X \times Y)\vp\rangle
\end{eqnarray*} 
for all $X,Y\in\h B(\R)$, so that equation \eqref{kaava1} holds for $\rho =P[\vp]$ and $S= P[\psi]$. Since both sides of equation \eqref{kaava1} depend linearly and continuously on $\rho$ and $S$, the validity of the equation in the general case follows by using the spectral representations for $\rho$ and $S$.

Now let $X,Y\in\h B(\R)$ be such that $\partial X$ and $\partial Y$ are of zero Lebesgue measure, so that according to the previous discussion we have the convergence
$$
\lim_{k\rightarrow\infty} \textrm{tr}[\rho\mathsf{E}^{S,r_k ,\frac{\pi}{2}} ( X\times Y)] =\textrm{tr} [\rho (\mu_{\epsilon} *\mathsf{G}^{CSC^{-1}})(X\times Y)]
$$
for any state $\rho$. Since the family of sets of the form $X\times Y$ where the boundaries of $X$ and $Y$ are of zero Lebesgue measure is closed under finite intersections and includes a neighbourhood base of any point $(x,y)\in\R^2$, it follows from \cite[Corollary 1, p. 14]{Billingsley} that for any state $\rho$, the sequence $( \mathsf{E}^{S,r_k ,\frac{\pi}{2}}_\rho )_{k\in\N}$ of probability measures converges weakly to the probability measure $\mu_{\epsilon} *\mathsf{G}^{CSC^{-1}}_\rho$. This completes our proof.
\end{proof}

\section{Some properties of the high-amplitude limit observable}\label{properties}
In \cite{eightport} it was shown that in the case of ideal photon detectors the high-amplitude limit observable is the covariant phase space observable $\mathsf{G}^{CSC^{-1}}$. Proposition \ref{smearing} now implies that the presence of inefficiencies causes a Gaussian smearing of the observable so that the actually measured observable is $\mu_{\epsilon} *\mathsf{G}^{CSC^{-1}}$. In this section we consider some properties of this smeared observable.

The first important observation is given in the next proposition which shows that the covariance is not lost in the process of smearing. That is, the observable is of the form $\mathsf{G}^{S(\epsilon)}$ for some generating operator $S(\epsilon)$. In fact, the operator $S(\epsilon)$ can always be expressed as a convolution of the operator $CSC^{-1}$ and the probability measure $\mu_\epsilon$, defined as the weak integral \cite{Werner}
\begin{equation}\label{convolution}
\mu_\epsilon \ast CSC^{-1} =\int W_{qp} CSC^{-1} W_{qp}^*\, d\mu_\epsilon(q,p),
\end{equation}
which is clearly a positive operator with unit trace. 
\begin{proposition}
The high-amplitude limit observable $\mu_\epsilon \ast \mathsf{G}^{CSC^{-1}}$ is a covariant phase space observable with the generating operator $\mu_\epsilon \ast CSC^{-1}$, that is, $\mu_\epsilon \ast \mathsf{G}^{CSC^{-1}}=\mathsf{G}^{\mu_\epsilon \ast CSC^{-1}}$.
\end{proposition}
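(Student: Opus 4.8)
The plan is to verify the operator identity by pairing both sides with an arbitrary $Z\in\h B(\R^2)$ and arbitrary vectors $\varphi,\psi\in\hil$; since both $(\mu_\epsilon\ast\mathsf G^{CSC^{-1}})(Z)$ and $\mathsf G^{\mu_\epsilon\ast CSC^{-1}}(Z)$ are bounded, equality of all matrix elements $\langle\varphi|\,\cdot\,|\psi\rangle$ is enough. Abbreviate $T=CSC^{-1}$, a positive trace class operator of unit trace. Unwinding the defining weak integral of $\mu_\epsilon\ast\mathsf G^{CSC^{-1}}$ (the obvious $\R^2$-analogue of \eqref{sumea}) together with the definition \eqref{covariant} of $\mathsf G^T$, the left-hand side becomes
\begin{equation*}
\langle\varphi|(\mu_\epsilon\ast\mathsf G^T)(Z)|\psi\rangle=\frac{1}{2\pi}\int_{\R^2}\mu_\epsilon\big(Z-(q,p)\big)\,\langle\varphi|W_{qp}TW_{qp}^*|\psi\rangle\,dq\,dp .
\end{equation*}
I would then write $\mu_\epsilon(Z-(q,p))=\int_{\R^2}\chi_Z\big((q,p)+(q',p')\big)\,d\mu_\epsilon(q',p')$ and aim to interchange the $dq\,dp$ and $d\mu_\epsilon$ integrations.

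The main points are the following. First, Fubini's theorem has to be justified, i.e.\ $\int_{\R^2}|\langle\varphi|W_{qp}TW_{qp}^*|\psi\rangle|\,dq\,dp<\infty$; this is where the only real care is needed. Inserting the spectral decomposition $T=\sum_n\lambda_n|e_n\rangle\langle e_n|$ with $\lambda_n\ge0$, $\sum_n\lambda_n=1$, and applying the Cauchy--Schwarz inequality together with the square integrability of the Weyl system (equivalently, normalization of each $\mathsf G^{e_n}$) gives the uniform bound $2\pi\|\varphi\|\,\|\psi\|$; the same estimate, applied to the positive unit trace operator $W_{q'p'}^*TW_{q'p'}$, justifies a second interchange below. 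Second, after translating the inner integration by $(q',p')$ one uses the Weyl relation: with the paper's normalization $W_{qp}=e^{iqp/2}e^{-iqP}e^{ipQ}$ one has $W_{-q',-p'}=W_{q'p'}^*$ and $W_{q-q',\,p-p'}$ equals a unimodular phase times $W_{qp}W_{q'p'}^*$, so the phases cancel in the conjugation and $W_{q-q',\,p-p'}\,T\,W_{q-q',\,p-p'}^*=W_{qp}\big(W_{q'p'}^*TW_{q'p'}\big)W_{qp}^*$. Third, interchange integrals again and pull the bounded $W_{qp}$, $W_{qp}^*$ out of the weak $\mu_\epsilon$-integral, which leaves $\int_{\R^2}W_{q'p'}^*TW_{q'p'}\,d\mu_\epsilon(q',p')$ between them. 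Fourth, since $\mu_\epsilon$ is invariant under the reflection $(q',p')\mapsto(-q',-p')$ --- immediate from the even Gaussian density \eqref{tiheys}, and trivial for the Dirac measure $\mu_{1,1}$ --- and $W_{-q',-p'}=W_{q'p'}^*$, this operator equals $\int_{\R^2}W_{q'p'}TW_{q'p'}^*\,d\mu_\epsilon(q',p')=\mu_\epsilon\ast T$, the generating operator of \eqref{convolution}.

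Combining the four steps gives $\langle\varphi|(\mu_\epsilon\ast\mathsf G^T)(Z)|\psi\rangle=\frac{1}{2\pi}\int_Z\langle\varphi|W_{qp}(\mu_\epsilon\ast T)W_{qp}^*|\psi\rangle\,dq\,dp=\langle\varphi|\mathsf G^{\mu_\epsilon\ast T}(Z)|\psi\rangle$ for all $\varphi,\psi\in\hil$ and all $Z\in\h B(\R^2)$, which is the claim with $T=CSC^{-1}$. I do not expect a genuine obstacle: the argument is entirely Fubini plus bookkeeping of the Weyl phases under the shift $(q,p)\mapsto(q,p)-(q',p')$, the only structural input being the reflection symmetry of $\mu_\epsilon$, which is precisely what lets the smeared generating operator be written in the $W_{q'p'}\,\cdot\,W_{q'p'}^*$ order of \eqref{convolution}. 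As a consistency check, whenever the density $f_\epsilon$ exists the identity reduces, upon pairing with a state $\rho$, to the elementary statement that the $\R^2$-convolution of $f_\epsilon$ with $(q,p)\mapsto\tfrac{1}{2\pi}\tr[\rho W_{qp}TW_{qp}^*]$ is $(q,p)\mapsto\tfrac{1}{2\pi}\tr[\rho W_{qp}(\mu_\epsilon\ast T)W_{qp}^*]$.
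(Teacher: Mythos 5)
Your argument is correct and is essentially the paper's own proof: both reduce the claim to an interchange of integrals (Fubini), a cancellation of Weyl phases under conjugation (which the paper packages as the covariance relation $\mathsf{G}^{CSC^{-1}}_{\vp}(Z+(q,p))=\mathsf{G}^{CSC^{-1}}_{W_{qp}^*\vp}(Z)$), and the reflection symmetry $\mu_\epsilon(-Z)=\mu_\epsilon(Z)$, which the paper invokes at the start and you invoke at the end. The only real difference is that you spell out the Cauchy--Schwarz/square-integrability bound justifying Fubini, which the paper leaves implicit.
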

\begin{proof} 
The definition of $\mu_\epsilon$ implies that $\mu_\epsilon (-Z) =\mu_\epsilon (Z)$ for all $Z\in\h B(\R^2)$, and the covariance of $\mathsf{G}^{CSC^{-1}}$ implies that $\mathsf{G}^{CSC^{-1}}_{\vp} (Z +(q,p)) =\mathsf{G}_{W_{qp}^* \vp}^{CSC^{-1}} (Z)$ for all unit vectors $\vp\in\hil$ and $ Z\in \h B(\R^2)$. Now we may use these facts, the commutativity of the convolution of probability measures, and Fubini's theorem to see that
\begin{eqnarray*}
 \langle \vp \vert (\mu_\epsilon \ast \mathsf{G}^{CSC^{-1}})  (Z) \vp\rangle &=&  (\mu_\epsilon \ast\mathsf{G}^{CSC^{-1}}_{\vp} )(Z) =\int \mathsf{G}^{CSC^{-1}}_{\vp} (Z-(q,p)) \, d\mu_\epsilon (q,p) \\
&=& \int \mathsf{G}^{CSC^{-1}}_{\vp} (Z+(q',p')) \, d\mu_\epsilon (q',p') =\int \mathsf{G}^{CSC^{-1}}_{W_{q'p'}^*\vp} (Z) \, d\mu_\epsilon (q',p')\\
&=&\frac{1}{2\pi} \int \left( \int_Z \langle \vp \vert W_{q'p'} W_{xy} CSC^{-1} W_{xy}^* W_{q'p'}^* \vp\rangle \, dxdy\right)\, d\mu_\epsilon (q',p') \\
&=& \frac{1}{2\pi} \int_Z \left( \int \langle \vp \vert W_{xy} W_{q'p'} CSC^{-1} W_{q'p'}^* W_{xy}^* \vp\rangle \, d\mu_\epsilon (q',p')\right)\, dxdy\\
&=&\frac{1}{2\pi} \int_Z \langle \vp \vert W_{xy}  (\mu_\epsilon\ast CSC^{-1} )W_{xy}^\ast \vp\rangle\, dxdy =\langle\vp \vert \mathsf{G}^{\mu_\epsilon\ast CSC^{-1} } (Z) \vp\rangle
\end{eqnarray*}
 for all unit vectors $\varphi\in\hil$ and $Z\in \h B(\R^2)$.
\end{proof}

Since there is a one-to-one correspondence between the covariant phase space observables and the generating operators, many questions concerning the  properties of a given  observable can be answered by studying only the properties of the generating operator. As an example, consider the extremality of an observable $\mathsf{G}^S$ in the sense of Holevo \cite{Holevo2}. The set of all covariant phase space observables is a convex set, and the convex combination of two observables $\mathsf{G}^{S_1}$ and $\mathsf{G}^{S_2}$ is simply $t\mathsf{G}^{S_1} +(1-t) \mathsf{G}^{S_2} =\mathsf{G}^{tS_1 +(1-t)S_2}$. Hence, an observable $\mathsf{G}^S$ is an extreme point of the convex set of covariant phase space observables if and only if $S$ is an extreme point of the set of positive trace class operators with unit trace. Furthermore, the extreme points of this set are the one-dimensional projections $P[\vp]$, $\vp\in\hil$, $\Vert\vp \Vert =1$, that is, the pure states. In the case of our specific observable, we obtain the following result.

\begin{proposition}\label{pure}
 The generating operator $\mu_\epsilon \ast CSC^{-1}$ is a pure state if and only if $S$ is a pure state and the detectors are ideal.
\end{proposition}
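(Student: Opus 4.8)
The plan is to characterise purity of the generating operator through the identity $\tr[(\mu_\epsilon \ast CSC^{-1})^{2}]=1$, combined with the elementary observation that a nonzero vector is an eigenvector of the Weyl operator $W_{ab}$ only when $(a,b)=(0,0)$. As a preliminary remark, inspection of \eqref{tiheys} and \eqref{mitta} shows that $\mu_\epsilon$ is the Dirac measure $\delta_0$ at the origin precisely when $\epsilon_1=\epsilon_2=\epsilon_3=\epsilon_4=1$, and is otherwise a (possibly degenerate) centred Gaussian having at least one coordinate marginal of strictly positive variance. This already settles the implication from right to left: if $S$ is pure and the detectors are ideal then $\mu_\epsilon\ast CSC^{-1}=\delta_0\ast CSC^{-1}=CSC^{-1}$, which is pure because the conjugation $C$ is antiunitary and hence carries pure states to pure states.

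For the converse, put $\rho=CSC^{-1}$ and $\sigma=\mu_\epsilon\ast\rho$, and assume $\sigma$ is pure. Expanding $\tr[\sigma^{2}]$ by means of the weak integral \eqref{convolution} and Fubini's theorem,
$$
\tr[\sigma^{2}]=\int\!\!\int \tr\!\left[W_{qp}\rho W_{qp}^{*}\,W_{q'p'}\rho W_{q'p'}^{*}\right] d\mu_\epsilon(q,p)\, d\mu_\epsilon(q',p').
$$
The multiplier identity for the representation $W$ turns the integrand into $\tr[\rho\,W_{ab}^{*}\rho\,W_{ab}]$ with $(a,b)=(q-q',p-p')$; this is the trace of a product of two states, so it lies in $[0,1]$ and equals $1$ if and only if $\rho$ is pure and $W_{ab}\rho W_{ab}^{*}=\rho$. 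Since $\mu_\epsilon\times\mu_\epsilon$ is a probability measure and $\tr[\sigma^{2}]=1$, the integrand must equal $1$ for $(\mu_\epsilon\times\mu_\epsilon)$-almost every pair $((q,p),(q',p'))$. In particular $\rho=CSC^{-1}$, and therefore $S$, is pure; and, pushing this statement forward along $((q,p),(q',p'))\mapsto(q-q',p-p')$ — whose image measure is $\mu_\epsilon\ast\mu_\epsilon$, because $\mu_\epsilon$ is symmetric — we obtain $W_{ab}\rho W_{ab}^{*}=\rho$ for $(\mu_\epsilon\ast\mu_\epsilon)$-almost every $(a,b)$.

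Writing the pure state $\rho$ as $P[\psi]$, the relation $W_{ab}P[\psi]W_{ab}^{*}=P[\psi]$ means $\psi$ is an eigenvector of $W_{ab}$; in the position representation $W_{ab}$ sends $f(x)$ to a phase times $e^{ibx}f(x-a)$, which has no nonzero $L^{2}$-eigenvector unless $(a,b)=(0,0)$ — for $a\neq0$ an eigenvector would have $a$-periodic modulus and hence infinite norm, and for $a=0$, $b\neq0$ multiplication by $e^{ibx}$ has purely continuous spectrum. Thus $\mu_\epsilon\ast\mu_\epsilon=\delta_0$. But if $\mu_\epsilon\neq\delta_0$ then, by the preliminary remark, some coordinate marginal of $\mu_\epsilon$ has positive variance, which $\mu_\epsilon\ast\mu_\epsilon$ doubles, contradicting $\mu_\epsilon\ast\mu_\epsilon=\delta_0$. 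Hence $\mu_\epsilon=\delta_0$, i.e.\ the detectors are ideal, and $S$ is pure.

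I expect the step requiring the most care to be the middle one: justifying the interchange of the trace with the weak integral \eqref{convolution} (and with the double integral, via Fubini), and — more subtly — carrying the conclusion through the pushforward $\mu_\epsilon\ast\mu_\epsilon$ rather than through the diagonal $\{(q,p)=(q',p')\}$, which in general carries no $\mu_\epsilon\times\mu_\epsilon$-mass. The remaining ingredients — the nonexistence of eigenvectors of $W_{ab}$ and the behaviour of the variance under convolution — are routine.
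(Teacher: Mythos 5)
Your proof is correct. It reaches the same final obstruction as the paper --- a nonzero vector cannot be an eigenvector of a Weyl operator $W_{ab}$ with $(a,b)\neq(0,0)$, i.e.\ irreducibility of the Weyl system --- but it gets there by a different route. The paper tests purity directly against the would-be eigenvector: if $\mu_\epsilon\ast CSC^{-1}=P[\vp]$ then $\int\langle\vp|W_{qp}CSC^{-1}W_{qp}^*\vp\rangle\,d\mu_\epsilon(q,p)=1$ forces the integrand to equal $1$ on the support of $\mu_\epsilon$, and it then treats separately the case where $\mu_\epsilon$ has the full two-dimensional density $f_\epsilon$ and the degenerate case where exactly one of $\mu_{\epsilon_1,\epsilon_3}$, $\mu_{\epsilon_2,\epsilon_4}$ is a Dirac measure (where purity would require an eigenvector of $Q$ or $P$). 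You instead use the criterion $\tr[\sigma^2]=1$, which produces the double integral of $\tr[\rho\,W_{ab}\rho W_{ab}^*]$ and lets you push the almost-everywhere condition forward to the symmetrized measure $\mu_\epsilon\ast\mu_\epsilon$; this handles all the degenerate Gaussian cases uniformly and avoids the case split, at the modest cost of having to justify the interchange of trace and weak integral (harmless here, since \eqref{convolution} converges in trace norm with integrand bounded by $\|CSC^{-1}\|_1=1$) and of the final step $\mu_\epsilon\ast\mu_\epsilon=\delta_0\Rightarrow\mu_\epsilon=\delta_0$, which your variance argument (or, even more quickly, $\hat{\mu}_\epsilon^2\equiv 1$ with $\hat{\mu}_\epsilon$ continuous and real) settles. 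Both arguments are complete; yours is slightly longer but more systematic, and has the side benefit of recording the quantitative fact $\tr[\sigma^2]=\int\tr[\rho\,W_{ab}\rho W_{ab}^*]\,d(\mu_\epsilon\ast\mu_\epsilon)(a,b)<1$ in the inefficient case.
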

\begin{proof}
If $\mu_\epsilon$ is the Dirac measure concentrated at the origin, then $\mu_\epsilon \ast CSC^{-1}=CSC^{-1}$, which is a pure state if and only if $S$ is a pure state. If $\mu_\epsilon$ has the density $f_\epsilon$, then $\mu_\epsilon\ast CSC^{-1}$ can not be a pure state since this would require that $\langle \vp \vert W_{qp} CSC^{-1} W_{qp}^* \vp\rangle =1$ for all $(q,p)\in\R^2$ and some unit vector $\vp\in\hil$, which is impossible since the projective representation $(q,p)\mapsto W_{qp}$ is irreducible. Similarly in the case that  one of the measures $\mu_{\epsilon_1,\epsilon_3}$ or $\mu_{\epsilon_2 ,\epsilon_4}$ in equation \eqref{mitta} is a Dirac measure, we find that $\mu_\epsilon \ast CSC^{-1}$ is never a pure state since this would require the existence of an eigenvector of either $Q$ or $P$.
\end{proof}

The consequence of proposition \ref{pure} is that  whenever detector inefficiences are present, the measured observable can be written as a nontrivial convex combination $\mathsf{G}^{\mu_\epsilon\ast CSC^{-1} } =t\mathsf{G}^{S_1} +(1-t)\mathsf{G}^{S_2}$ for some generating operators $S_1$ and $S_2$, and for some weight factor $t\in [0,1]$. This is usually taken to correspond to classical randomization between the two observables $\mathsf{G}^{S_1}$ and $\mathsf{G}^{S_2}$. In particular, proposition \ref{pure} thus verifies the perhaps intuitive fact that ideal detectors are necessary for the measurement to be a pure quantum measurement. In the nonideal case, a natural question is whether the state $S$ itself can be a component of $\mu_\epsilon \ast CSC^{-1}$ so that the measurement of $\mathsf{G}^{\mu_\epsilon\ast CSC^{-1} } $ could be seen as a randomization of the ideal observable $\mathsf{G}^S$ with another observable $\mathsf{G}^{S'}$ which takes care of detector inefficiencies. Though the possible decompositions of a positive trace one operator into its pure components have been fully characterized \cite{Hadjisavvas, Cassinelli2}, we are unable to answer the above question in general. However, if the parameter field is in the  vacuum state, $S=\vert 0\rangle \langle 0\vert$, and  the detector efficiencies are equal, $\epsilon_j =\epsilon\in (0,1)$ for all $j=1,2,3,4$, we can easily calculate
$$
\mu_\epsilon *\vert 0\rangle \langle 0\vert =\epsilon \, \sum_{n=0}^\infty (1-\epsilon )^n \vert n\rangle\langle n \vert=\epsilon \vert 0\rangle\langle 0\vert  +(1-\epsilon )S',
$$
where $S'= \frac{\epsilon}{1-\epsilon} \sum_{n=1}^\infty (1-\epsilon )^n \vert n\rangle\langle n\vert$. In this particular case, the vacuum state is always a component of the generating operator, and thus $\mathsf{G}^{\mu_\epsilon * \vert 0\rangle\langle 0\vert} =\epsilon \mathsf{G}^{\vert 0\rangle} +(1-\epsilon) \mathsf{G}^{S'}$, a result which is already implicitly contained in \cite{Leonhardt1993}.

Usually in the process of smearing, the state distinguishing power of the observable decreases. However,  due to the Gaussian structure of the convolving measure this effect is avoided. In particular, we may prove the following result.

\begin{proposition}\label{info}
The observables $\mu_{\epsilon}\ast \mathsf{G}^{CSC^{-1}}$ and $\mathsf{G}^{CSC^{-1}}$ are informationally equivalent. In particular, $\mu_{\epsilon} *\mathsf{G}^{CSC^{-1}}$ is informationally complete if and only if $\mathsf{G}^{CSC^{-1}}$ is informationally complete.
\end{proposition}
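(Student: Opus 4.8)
The plan is to take advantage of the fact that $\mu_\epsilon$ is a centred Gaussian probability measure on $\R^2$ (degenerate in one direction or even a Dirac measure in the ideal and partially ideal cases), so that its characteristic function vanishes nowhere. Recall that, by the definitions given in Section~\ref{preli}, the observables $\mu_\epsilon\ast\mathsf{G}^{CSC^{-1}}$ and $\mathsf{G}^{CSC^{-1}}$ are informationally equivalent precisely when, for every pair of states $\rho_1,\rho_2$, the statistics $(\mu_\epsilon\ast\mathsf{G}^{CSC^{-1}})_{\rho_1}$ and $(\mu_\epsilon\ast\mathsf{G}^{CSC^{-1}})_{\rho_2}$ coincide if and only if $\mathsf{G}^{CSC^{-1}}_{\rho_1}=\mathsf{G}^{CSC^{-1}}_{\rho_2}$. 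The implication from right to left is immediate, since convolving two equal measures with $\mu_\epsilon$ produces two equal measures: a smearing never increases the distinguishing power. The whole content of the proposition is the converse, i.e.\ that the ideal statistics can be reconstructed from the smeared ones.

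For the converse I would first note that, directly from the definition of $\mu_\epsilon\ast\mathsf{G}^{CSC^{-1}}$ as a weak integral and Fubini's theorem, for every state $\rho$ one has $(\mu_\epsilon\ast\mathsf{G}^{CSC^{-1}})_{\rho}=\mu_\epsilon\ast(\mathsf{G}^{CSC^{-1}}_{\rho})$, a convolution of the two probability measures $\mu_\epsilon$ and $\mathsf{G}^{CSC^{-1}}_{\rho}$ on $\R^2$. Passing to characteristic functions and using that the characteristic function of a convolution is the product of the characteristic functions,
\[
\widehat{(\mu_\epsilon\ast\mathsf{G}^{CSC^{-1}})_{\rho}}(u,v)=\widehat{\mu_\epsilon}(u,v)\,\widehat{\mathsf{G}^{CSC^{-1}}_{\rho}}(u,v),\qquad (u,v)\in\R^2 .
\]
By \eqref{tiheys} and \eqref{mitta}, $\mu_\epsilon$ is a product of (possibly Dirac) centred Gaussian measures whose variances are strictly positive whenever the corresponding efficiencies are not both equal to $1$, because $\epsilon_1-2\epsilon_1\epsilon_2+\epsilon_2=\epsilon_1(1-\epsilon_2)+\epsilon_2(1-\epsilon_1)>0$ for $\epsilon_1,\epsilon_2\in(0,1]$ not both $1$; hence $\widehat{\mu_\epsilon}(u,v)\neq 0$ for all $(u,v)\in\R^2$, regardless of the efficiencies. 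Therefore, if $(\mu_\epsilon\ast\mathsf{G}^{CSC^{-1}})_{\rho_1}=(\mu_\epsilon\ast\mathsf{G}^{CSC^{-1}})_{\rho_2}$, dividing the above identity by $\widehat{\mu_\epsilon}$ gives $\widehat{\mathsf{G}^{CSC^{-1}}_{\rho_1}}=\widehat{\mathsf{G}^{CSC^{-1}}_{\rho_2}}$ pointwise, and the continuity theorem \cite[Theorem 7.6]{Billingsley} yields $\mathsf{G}^{CSC^{-1}}_{\rho_1}=\mathsf{G}^{CSC^{-1}}_{\rho_2}$. This establishes the informational equivalence. The final assertion follows at once: informational completeness of an observable is exactly the injectivity of the map $\rho\mapsto\mathsf{E}_\rho$ on states, and informational equivalence says that this map coincides, up to the same identifications, for $\mu_\epsilon\ast\mathsf{G}^{CSC^{-1}}$ and for $\mathsf{G}^{CSC^{-1}}$, so one is injective precisely when the other is.

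The argument is short, and the only step that requires genuine attention is the non-vanishing of $\widehat{\mu_\epsilon}$ in all three regimes — ideal, partially ideal, and fully inefficient detectors — together with the routine but necessary identification $(\mu_\epsilon\ast\mathsf{G}^{CSC^{-1}})_{\rho}=\mu_\epsilon\ast(\mathsf{G}^{CSC^{-1}}_{\rho})$. An alternative, essentially equivalent route uses the previous proposition, $\mu_\epsilon\ast\mathsf{G}^{CSC^{-1}}=\mathsf{G}^{\mu_\epsilon\ast CSC^{-1}}$, together with the known description of the distinguishing power of a covariant phase space observable $\mathsf{G}^T$ in terms of the zero set of the function $(q,p)\mapsto\tr[W_{qp}T]$ (cf.\ \cite{Prugo, Werner}); since convolution with $\mu_\epsilon$ multiplies this function by the nowhere-vanishing $\widehat{\mu_\epsilon}$, the zero sets, and hence the distinguishing powers, of $\mathsf{G}^{\mu_\epsilon\ast CSC^{-1}}$ and $\mathsf{G}^{CSC^{-1}}$ coincide.
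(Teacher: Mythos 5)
Your proof is correct, and it rests on the same key fact as the paper's -- the nowhere-vanishing of the Gaussian (or Dirac) characteristic function $\widehat{\mu_\epsilon}$ in all three regimes -- but it implements the deconvolution at a different level. The paper works with \emph{densities}: it writes $h=f_\epsilon\ast g^{CSC^{-1}}_\rho$, divides the Fourier transforms, and then must verify that $\hat{g}^{CSC^{-1}}_\rho\in L^1(\R^2)$ (via Werner's formula expressing it as a product of two square-integrable functions) in order to invoke Fourier inversion and write down an explicit reconstruction formula for $g^{CSC^{-1}}_\rho$. You instead work with the \emph{measures} themselves: from $(\mu_\epsilon\ast\mathsf{G}^{CSC^{-1}})_\rho=\mu_\epsilon\ast\mathsf{G}^{CSC^{-1}}_\rho$ you get $\widehat{\mu_\epsilon}\,\widehat{\mathsf{G}^{CSC^{-1}}_{\rho_1}}=\widehat{\mu_\epsilon}\,\widehat{\mathsf{G}^{CSC^{-1}}_{\rho_2}}$, cancel the nonvanishing factor, and conclude by uniqueness of characteristic functions. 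This is shorter and sidesteps the integrability check entirely, and it also handles the partially ideal cases without a separate argument. What it does not deliver is the explicit inversion formula, which the paper uses afterwards as a practical recipe for recovering the ideal statistics (e.g.\ the $Q$-function) from inefficient data; if that constructive content matters, the paper's route is the one to keep. Two small points: the step ``equal characteristic functions imply equal probability measures'' is the \emph{uniqueness} theorem for characteristic functions, not the continuity theorem you cite (Billingsley's Theorem 7.6 concerns weak convergence); and the identity $(\mu_\epsilon\ast\mathsf{G}^{CSC^{-1}})_\rho=\mu_\epsilon\ast(\mathsf{G}^{CSC^{-1}}_\rho)$, which you correctly flag as needing Fubini, is indeed immediate from the weak-integral definition \eqref{sumea}. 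Your alternative route via the zero set of $(q,p)\mapsto\tr[W_{qp}T]$ is also sound and is the standard way these informational-completeness statements are phrased in \cite{Prugo,Werner}.
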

\begin{proof}
We prove this by showing that for any state $\rho$ the densities of the corresponding probability measures can always be obtained from each other. Obviously the density of $\mu_\epsilon \ast \mathsf{G}^{CSC^{-1}}_\rho$ can always be calculated from the density of $\mathsf{G}^{CSC^{-1}}_\rho$ by performing the convolution transform, so we need to show that the convolution can always be inverted. 

First, let $\epsilon_j$, $j=1,2,3,4$, be such that $\mu_\epsilon$ has the density $f_\epsilon$. The probability measure $\mu_\epsilon \ast \mathsf{G}^{CSC^{-1}}_\rho$ has the density $f_\epsilon \ast g^{CSC^{-1}}_\rho$ which we denote by $h$. Since $\hat{h}  =2\pi \hat{f_\epsilon} \hat{g}^{CSC^{-1}}_\rho$ and the function  $\hat{f_\epsilon}$ is nonzero everywhere, it follows that if $\hat{g}^{CSC^{-1}}_\rho \in L^1(\R^2)$ then
$$
g^{CSC^{-1}}_\rho (x,y) = \frac{1}{4\pi^2} \int e^{i(xq +yp)} \frac{\hat{h} (q,p)}{\hat{f}_\epsilon (q,p)} \, dqdp
$$
for almost all $(x,y)\in \R^2$.  Using \cite[Proposition 3.4(1)]{Werner} we find that 
$$
\hat{g}^{CSC^{-1}}_\rho (q,p) = \frac{1}{4\pi^2} \int e^{-i(qx+py)} \textrm{tr}[\rho W_{xy} CSC^{-1} W_{xy}^*]\, dxdy=   \frac{1}{2\pi} \textrm{tr}[ \rho W_{p,-q} ] \textrm{tr} [W_{p,-q}^*CSC^{-1} ].
$$
Since both of the functions on the right-hand side are square-integrable, it follows that their product is integrable, and thus $\hat{g}^{CSC^{-1}}_\rho\in L^1 (\R^2)$.

If either $\epsilon_1 =\epsilon_3 =1$ or $\epsilon_2=\epsilon_4 =1$, then the density is of the form $h_j =f_j \ast g^{CSC^{-1}}$, $j=1,2$, where $f_j$ is a one-dimensional Gaussian and the convolution is taken only with respect to the first or second argument depending on the case in question.  The  Fourier transform now gives $\hat{h}  =\sqrt{2\pi} \hat{f_j} \hat{g}^{CSC^{-1}}$, where $\hat{f_j}$ is the one-dimensional Fourier transform of $f_j$. Thus, the same argument holds in these special cases, and we may conclude that the convolution can always be inverted.

\end{proof}

Apart from showing the informational equivalence of the observables, the proof of proposition \ref{info} also provides a practical means of compensating the additional smoothing caused by detector inefficiencies. This means that the measurement statistics of the ideal observable can always be extracted from the statistics obtained by inefficient measurements. This is of particular importance when applying this measurement scheme in quantum state reconstruction. Suppose that the parameter field is in the vacuum state. With ideal detectors, this measurement setup constitutes a measurement of the observable $\mathsf{G}^{\vert 0\rangle}$, and the density of the corresponding probability measure $\mathsf{G}^{\vert 0\rangle}_\rho$ is the $Q$-function of the signal state $\rho$. It is a well-known fact that this observable is informationally complete, or in other words, the $Q$-function determines the state uniquely. Furthermore, several reconstruction formulae for calculating the matrix elements of $\rho$ with respect to the number basis are also known. (see, e.g., \cite{KiPeSc, Paris, Richter}). Thus, in the case of inefficient detectors we may always apply the method of proposition \ref{info} to first reconstruct the $Q$-function from the measurement statistics and then proceed to determine the state of the signal field.

\

\section{Conclusion}\label{conclusions}
We have considered the eight-port homodyne detection scheme in the case that each of the associated photon detectors is assigned with a different quantum efficiency. We have shown that in the high-amplitude limit, the measured observable approaches a covariant phase space observable which is a smearing of the one obtained by using ideal detectors. We have also studied some properties of the high-amplitude limit observable. In particular, we have shown that the state distinguishing power of the observable does not depend on the quantum efficiencies of the detectors. Futhermore, we have seen that when detector inefficiencies are present, the measured observable is never extremal. That is, the measurement is never a pure quantum measurement.

To conclude, we wish to emphasize that the quantum efficiency is only one of the properties which characterize a photon detector. In fact, there is a wide variery of features, ranging from operating temperatures to dark count rates, which are used to classify the detectors  \cite{Hadfield}. Thus, in the case of a specific measurement one needs to decide which properties are the ones that need to be optimized. The consequence of proposition \ref{info} is that in eight-port homodyne detection, the quantum efficiencies have no effect on the amount of information obtained about the state of the signal field. Hence, at least for the purpose of quantum state reconstruction, the quantum efficiencies of the available photon detectors are of little relevance, and one may concentrate on the other properties of the detectors.

\

\noindent {\bf Acknowledgment.} J. S. was supported by the Turku University Foundation and the Finnish Cultural Foundation during the preparation of the manuscript.


\begin{thebibliography}{99}
\bibitem{Prugo} S.T. Ali, E. Prugove\v cki, Classical and quantum statistical mechanics in a common Liouville space,
{\em Physica} {\bf 89A} (1977) 501-521.
\bibitem{Billingsley} P. Billingsley, {\em Convergence of Probability Measures}, John Wiley \& Sons, New York, 1968.
\bibitem{OQP} P. Busch, M. Grabowski, P. Lahti, {\em Operational Quantum Physics}, Springer, Berlin, 1995, 2nd corrected
printing, 1997.
\bibitem{Carmeli} C. Carmeli, T. Heinonen, A. Toigo, On the coexistence of position and momentum observables, {\em J. Phys. A: Math. Gen.} {\bf 38} (2005) 5253-5266.
\bibitem{Cassinelli} G. Cassinelli, E. De Vito, A. Toigo, Positive operator values measures with respect to an irreducible representation, {\em J. Math. Phys.} {\bf 44} (2003) 4768-4775.
\bibitem{Cassinelli2} G. Cassinelli, E. De Vito, A. Levrero, On the decompositions of a quantum state, {\em J. Math. Anal. Appl.} {\bf 210} (1997) 472-483.
\bibitem{Hadfield} R. H. Hadfield, Single-photon detectors for optical quantum information applications, {\em Nature Photonics} {\bf 3} (2009) 696-705.
\bibitem{Hadjisavvas} N. Hadjisavvas, Properties of mixtures of non-orthogonal states, {\em Lett. Math. Phys.} {\bf 5} (1981) 327-332. 
\bibitem{Holevo} A. S. Holevo, Covariant measurements and uncertainty relations, {\em Rep. Math. Phys.} {\bf 16} (1979) 385-400.
\bibitem{Holevo2} A. S. Holevo, {\em Statistical Structure of Quantum Theory}, Springer, Berlin, 2001.
\bibitem{Kiukas} J. Kiukas, P. Lahti, K. Ylinen, Normal covariant quantization maps, {\em J. Math. Anal. Appl.} {\bf 319} (2006) 783-801.
\bibitem{moment} J. Kiukas, P. Lahti, On the moment limit of quantum observables, with an application to the balanced homodyne detection, {\em J. Mod. Opt.} {\bf 55} (2008) 1175-1198.
\bibitem{eightport} J. Kiukas, P. Lahti, A note on the measurement of the phase space observables with an eight port homodyne detector, {\em J. Mod. Opt.} {\bf 55} (2008) 1891-1898.
\bibitem{KiPeSc} J. Kiukas, J.-P. Pellonp\"a\"a, J. Schultz, State reconstruction formulas for the $s$-distributions and quadratures, {\em Rep. Math. Phys.}, in press, arXiv:0909.3416.
\bibitem{Leonhardt} U. Leonhardt, {\em Measuring the Quantum State of Light}, Cambridge University Press, 1997.
\bibitem{Leonhardt1993} U. Leonhardt, H. Paul, Realistic optical homodyne measurements and quasiprobability distributions, {\em Phys. Rev. A} {\bf 48} (1993) 4598-4604.
\bibitem{Paris} M. G. A. Paris, Quantum state measurement by realistic heterodyne detection, {\em Phys. Rev. A} {\bf 53} (1996) 2658-2663.
\bibitem{Richter} Th. Richter, Determination of photon statistics and density matrix from double homodyne detection measurements, {\em J. Mod. Opt.} {\bf 45} (1998) 1735-1746.
\bibitem{Werner} R. Werner, Quantum harmonic analysis on phase space, {\em J. Math. Phys.} {\bf 25} (1984) 1404-1411.

\end{thebibliography}
\end{document}